\numberwithin{equation}{section}
\newtheorem{theorem}{Theorem}[section]
\newtheorem{lemma}[theorem]{Lemma}
\newtheorem{proposition}[theorem]{Proposition}
\newenvironment{proof}{\noindent {\bf Proof} }{\endprf\par}
\def \endprf{\hfill {\vrule height6pt width6pt depth0pt}\medskip}
\definecolor{editcolor}{rgb}{1,0.87,0.87}
\makeatletter\newenvironment{editboxx}[1][.95]
{\begin{lrbox}{\@tempboxa}\begin{minipage}{#1\linewidth}\vspace{1ex}}
{\vspace{0ex}\end{minipage}\end{lrbox}\vspace{1ex}\fcolorbox{black}{editcolor}{\usebox{\@tempboxa}}\vspace{1ex}}
\newcommand{\vct}[1]{\bm{#1}}
\newcommand{\mtx}[1]{\bm{#1}}
\newcommand{\R}{\mathbb{R}}
\newcommand{\C}{\mathbb{C}}
\newcommand{\<}{\langle}
\renewcommand{\>}{\rangle}
\newcommand{\E}{\operatorname{\mathbb{E}}}
\renewcommand{\P}{\operatorname{\mathbb{P}}}
\newcommand{\n}{n}				
\newcommand{\m}{m}				
\newcommand{\s}{s}				
\newcommand{\OB}{\Omega}		
\newcommand{\ROB}{\mtx{R}_\OB}		
\newcommand{\Fee}{\bm{\Phi}}	
\newcommand{\FT}{\mtx{F}}		
\newcommand{\Sh}{\mtx{S}}		
\newcommand{\Mod}{\mtx{M}}		
\newcommand{\oa}{{\omega}}
\newcommand{\ob}{{\xi}}
\newcommand{\ek}{\varepsilon_k}
\newcommand{\el}{\varepsilon_\ell}
\newcommand{\eps}{\vct{\varepsilon}}
\renewcommand{\widehat}{\hat}
\newcommand{\PO}{\widehat{\mtx{P}}_{\Omega}}
\newcommand{\POab}{\widehat{P}_{\Omega}(\oa,\ob)}
\newcommand{\fxa}{\widehat{x}(\oa)}
\newcommand{\fya}{\widehat{y}(\oa)}
\newcommand{\fxb}{\widehat{x}(\ob)}
\newcommand{\fyb}{\widehat{y}(\ob)}
\newcommand{\hi}{{\widehat\infty}}
\newcommand{\fx}{\widehat{\vct{x}}}
\newcommand{\fX}{\widehat{\mtx{X}}}
\newcommand{\fY}{\widehat{\mtx{Y}}}
\newcommand{\iunit}{\mathrm{i}}
\newcommand{\spnorm}[1]{\vert\!\vert\!\vert {#1} \vert\!\vert\!\vert}
\begin{document}


\title{Restricted Isometries for Partial Random Circulant Matrices}
\author{Holger Rauhut, Justin Romberg, and Joel A.\ Tropp\thanks{H.\ R.\ is with the Hausdorff Center for Mathematics and the Institute for Numerical Simulation at the University of Bonn, Germany; email: rauhut@hcm.uni-bonn.de.  J.\ R.\ is with the School of Electrical and Computer Engineering at Georgia Tech in Atlanta, Georgia; email: jrom@ece.gatech.edu.  J.\ T.\ is with Applied and Computational Mathematics at Caltech in Pasadena, California; email: jtropp@acm.caltech.edu.
H.\ R.\ acknowledges generous support by the Hausdorff Center for Mathematics, and through the WWTF project SPORTS (MA07-004). J.\ R.\ was supported by ONR Young Investigator Award N00014-08-1-0884 and a Packard Fellowship. J.\ T.\ has been supported by ONR award N00014-08-1-0883, DARPA award N66001-08-1-2065, and AFOSR award FA9550-09-1-0643.}}

\maketitle 

\begin{abstract}
	In the theory of compressed sensing, \emph{restricted isometry} analysis has become a standard tool for studying how efficiently a measurement matrix acquires information about sparse and compressible signals.  Many recovery algorithms are known to succeed when the restricted isometry constants of the sampling matrix are small.
Many potential applications of compressed sensing involve a data-acquisition process that proceeds by convolution with a random pulse followed by (nonrandom) subsampling.  At present, the theoretical analysis of this measurement technique is lacking.  This paper demonstrates that the $\s$th order restricted isometry constant is small when the number $\m$ of samples satisfies $\m \gtrsim (\s \log \n)^{3/2}$, where $\n$ is the length of the pulse.  This bound improves on previous estimates, which exhibit quadratic scaling.
\end{abstract}

\section{Introduction}

The theory of compressed sensing~\cite{carota06,cata06,ca06-1,do06-2,foraXX,ra10}
predicts that a small number of linear samples suffice to capture all the information in a sparse vector and that, furthermore, we can recover the sparse vector from these samples using efficient algorithms.  This discovery has a number of potential applications in signal processing, as well as other areas of science and technology.

The linear data acquisition process is described by a measurement matrix.  The \emph{restricted isometry property} (RIP)~\cite{cata06,carota06-1,foraXX,ra10} is a standard tool for studying how efficiently this matrix captures information about sparse signals.  The RIP also streamlines the analysis of signal reconstruction algorithms.
It is unknown whether any deterministic measurement matrix satisfies the RIP with the optimal scaling behavior.  See, e.g., the discussion in \cite[Sec.~2.5]{ra10} or \cite[Sec.~5.1]{foraXX}.  In contrast, a variety of random measurement matrices exhibit the RIP with optimal scaling, including Gaussian matrices and Rademacher matrices~\cite{cata06,dota09,rascva08,badadewa08}.

Although Gaussian random matrices are optimal for sparse recovery, they have limited use in practice because many measurement technologies impose structure on the matrix.  Furthermore, recovery algorithms tend to be more efficient when the matrix admits a fast matrix--vector multiply.  For example, random sets of rows from a Fourier transform matrix model the measurement process in MRI imaging, and these partial Fourier matrices lead to fast recovery algorithms because they can be applied using the FFT.  It is known that a partial Fourier matrix satisfies a near-optimal RIP~\cite{cata06,rudelson08sp,ra08,ra10}; the paper \cite{ra10} contains some generalizations, and we refer to \cite{rawa10} for a variation related to recovery of sparse Legendre expansions.

Many potential applications of compressed sensing involve sampling processes that can be modeled by convolution with a random pulse.  This measurement process can be modeled using a random circulant matrix.  When we retain only a limited number of samples from the output of the convolution, the measurement process is described by a partial random circulant matrix.  This situation has been studied in several works from the compressed sensing literature, including~\cite{tropp06fi,bajwa07to,haupt10to,romberg09co,rauhut09ci}.
So far, the best available analysis of a partial random circulant matrix suggests that its restricted isometry constants do not exhibit optimal scaling.  This work describes a new analysis that dramatically improves the previous estimates.  Nevertheless, our results still fall short of the optimal scaling that one might hope for.

\subsection{Compressed Sensing}

The compressed sensing problem considers how to recover a vector $\vct{x} = (x_1, \dots, x_\n)^T \in \R^\n$ from the linear image
\[
\vct{y} ~=~ \mtx{A} \vct{x},
\]
where the matrix $\mtx{A} \in \R^{\m \times \n}$ and $\m \ll \n$. Clearly, it is impossible to reconstruct the vector $\vct{x}$ without additional prior information.  Compressed sensing introduces the extra assumption that $\vct{x}$ is $\s$-sparse, i.e., $\|\vct{x}\|_0 := \#\{\ell : x_\ell \neq 0\} \leq \s$ for some $\s \ll \n$. 
More generally, we assume that $\vct{x}$ is well-approximated by a sparse vector.

The na{\"i}ve approach of reconstructing $\vct{x}$ by solving the $\ell_0$-minimization problem,
\[
\min_{\vct{z}} \|\vct{z}\|_0 \quad \mbox{ subject to }\quad \vct{y} ~=~ \mtx{A} \vct{x},
\]
is {\sf NP}-hard \cite{na95}. Therefore, several tractable heuristics have been proposed in the literature as alternatives to $\ell_0$-minimization, most notably greedy algorithms \cite{blumensath09it,neve09-1,needell09co,tr04,fo10-2}
and $\ell_1$-minimization \cite{chdosa99,do06-2,carota06}. The latter approach
consists in solving the convex program
\begin{equation}\label{l1:min}
\min_{\vct{z}} \|\vct{z}\|_1 \quad \mbox{ subject to } \vct{y} ~=~ \mtx{A} \vct{x},
\end{equation}
where $\|\cdot \|_p$ denotes the usual $\ell_p$ vector norm.

The restricted isometry property (RIP) offers a very elegant way to analyze $\ell_1$-minimization and greedy algorithms.
Define the restricted isometry constant $\delta_{\s}$ of an $\m \times \n$ matrix $\mtx{A}$ to be the smallest positive number that satisfies
\begin{equation}
	\label{eq:RIPs}
	(1-\delta_\s)\|\vct{x}\|^2_2 ~\leq~ \|\mtx{A} \vct{x}\|^2_2 ~\leq~ (1+\delta_\s)\|\vct{x} \|^2_2
	\quad
	\text{ for all } \vct{x} \mbox{ with } \|\vct{x}\|_0\leq\s.
\end{equation}
In words, the statement~\eqref{eq:RIPs} requires that all column submatrices of $\mtx{A}$ with at most $\s$ columns are well-conditioned.
Informally, $\mtx{A}$ is said to satisfy the RIP (with order $\s$) when $\delta_{\s}$ is small (for $\s$ close to $\m$).

A number of recovery algorithms are provably effective for sparse recovery if the matrix
$\mtx{A}$ satisfies the RIP. More precisely, suppose that the matrix $\mtx{A}$ obeys \eqref{eq:RIPs} with 
\begin{equation}\label{delta:kappa}
\delta_{\kappa \s} ~<~ \delta^*
\end{equation}
for suitable  constants $\kappa \geq 1$ and $\delta^* < 1$. 
Then these algorithms precisely recover all $\s$-sparse vectors $\vct{x}$ from the measurements $\vct{y} = \mtx{A} \vct{x}$.  More generally, when the vector $\vct{x}$ is arbitrary and we acquire noisy observations
\[
\vct{y} ~=~ \mtx{A} \vct{x} + \vct{e}
\quad\text{where}\quad
\|\vct{e}\|_2 \leq \tau,
\]
these algorithms return a reconstruction $\widetilde{\vct{x}}$ that satisfies an error bound of the form
\begin{equation}\label{rec:stability}
\| \vct{x} - \widetilde{\vct{x}} \|_2~\leq~ C_1 \frac{\sigma_\s(\vct{x})_1}{\sqrt{\s}} + C_2 \tau,
\end{equation}
where $\sigma_\s(\vct{x})_1 = \inf_{\|\vct{z}\|_0 \leq \s} \|\vct{x} - \vct{z}\|_1$ denotes the error of best $\s$-term approximation in $\ell_1$ and $C_1, C_2$ are positive constants.  Table \ref{tableRIP} lists the best values available for the constants $\kappa$ and $\delta^*$
for several algorithms along with appropriate references.

\begin{table}[htdp]
\begin{center}
\begin{tabular}{|l|l|c|c|}
\hline
Algorithm & References & $\kappa$ & $\delta^*$ \\
\hline
$\ell_1$-minimization \eqref{l1:min} & \cite{carota06-1,ca08,fo10-3} & $2$ & $\frac{3}{4+\sqrt{6}} \approx 0.4652$ \\
CoSaMP & \cite{needell09co,fo10} & $4$ & $\sqrt{\frac{2}{5 + \sqrt{73}}} \approx 0.3843$\\
Iterative Hard Thresholding & \cite{blumensath09it,fo10-3} & $3$ & $1/2$\\
Hard Thresholding Pursuit & \cite{fo10-2} & $3$ & $1/\sqrt{3} \approx 0.5774$\\  
\hline
\end{tabular}
\caption{Values of the constants $\kappa$ and $\delta^*$ in \eqref{delta:kappa} for various recovery algorithms.}
\label{tableRIP}
\end{center}
\end{table}%

A Gaussian random matrix $\mtx{A} \in \R^{\m \times \n}$ is a matrix that has independent, normally distributed entries with mean zero and variance one. It is shown, e.g., in \cite{cata06,mepato09,badadewa08} that the restricted isometry
constants of $\frac{1}{\sqrt{\m}} \mtx{A}$ satisfy $\delta_\s \leq \delta$ with high probability provided that
\[
m ~\geq~ C \delta^{-2} \s \log(\n/\s).
\]
It follows that the number $\m$ of Gaussian measurements required to reconstruct an $\s$-sparse signal of length $\n$ is \emph{linear} in the sparsity and \emph{logarithmic} in the ambient dimension.
See \cite{cata06,mepato09,badadewa08,foraXX,ra10} for precise statements and extensions to Bernoulli and subgaussian matrices.
It follows from lower estimates of Gelfand widths that this bound on the required
samples is optimal \cite{codade09,foparaul10,gagl84}; that is, the $\log$-factor must be present.

For a matrix consisting of $\m$ random rows from an $\n \times \n$ discrete Fourier transform matrix, slightly weaker estimates are available~\cite{cata06,rudelson08sp,ra08,ra10}.
The restricted isometry constants of this matrix satisfy $\delta_\s \leq \delta$ with high probability provided that
\[
\m ~\geq~ C \delta^{-2} \s \log^3(\s) \log(\n).
\]

\subsection{Partial Random Circulant Matrices}

Given a vector $\vct{\phi} = (\phi_0,\hdots,\phi_{\n-1})^T \in \R^\n$, we introduce the circulant matrix
\begin{equation}
	\label{eq:Phidef:zero}
	\Fee^\circ ~= ~
	\begin{bmatrix}
		\phi_0 & \phi_{\n-1} & \cdots & \phi_1 \\
		\phi_1 & \phi_0 & \cdots & \phi_2 \\
		\vdots & \vdots & & \vdots \\
		\phi_{\n-1} & \phi_{\n-2} & \cdots & \phi_0 \\ 
	\end{bmatrix} \; \in \R^{\n \times \n}.
\end{equation}
Square matrices are not very interesting for compressed sensing, so we will restrict our attention to a row submatrix of $\Fee^\circ$.
Consider an \emph{arbitrary} index set $\OB \subset \{0,1,\dots,\n-1\}$ whose cardinality  $|\OB| = \m$.  We define the operator $\ROB\in\R^{\m\times\n}$ that restricts a vector to the entries listed in $\OB$. Then the corresponding partial circulant matrix generated with $\vct{\phi} \in \R^{\n}$ 
is defined as
\begin{equation}
\label{eq:Phidef}
\Fee ~=~ \frac{1}{\sqrt{\m}}\ROB\Fee^\circ.
\end{equation}
The action of $\Fee$ can be interpreted as a circular convolution with the sequence 
$\frac{1}{\sqrt{\m}}\vct{\phi}$ followed by a subsampling at locations indexed by $\Omega$.  

We will demonstrate that a partial circulant matrix with a random generator $\vct{\phi}$ has small restricted isometry constants.  As a result, we can recover a sparse vector $\vct{x}$ robustly from measurements $\vct{y} = \Fee \vct{x}$ using any of the algorithms mentioned above.  Since $\Fee^{\circ}$ can be diagonalized via the Fourier transform, the matrices $\Fee$ and $\Fee^*$ both admit fast matrix--vector multiplication using the FFT algorithm.  This fact allows us to accelerate recovery algorithms substantially.

A Rademacher sequence $\eps = (\varepsilon_1,\hdots,\varepsilon_\n)^T$ is a sequence of 
independent random variables, each taking the values $+1$ and $-1$ with
equal probability. In the sequel, the matrix $\Fee$ in \eqref{eq:Phidef} will always be generated by a Rademacher sequence $\vct{\phi} = \eps$, and we will refer to it as a \emph{partial random circulant matrix}.

The main result of this paper is the following theorem.
\begin{theorem}
	\label{th:mean}
	Let $\OB$ be an \emph{arbitrary} subset of $\{0,1,\ldots,\n-1\}$ with cardinality $|\OB|=\m$.
	Let $\Fee$ be the corresponding partial random circulant matrix \eqref{eq:Phidef} generated by a Rademacher sequence, and let $\delta_\s$
	denote the $\s$th restricted isometry constant. 
	Then 
	\begin{equation}
		\label{eq:Ebound}
		\E[\delta_\s] ~\leq~ C_1\max\left\{\frac{\s^{3/2}}{\m}\log^{3/2}\n,~
		\sqrt{\frac{\s}{\m}}\log \s \log \n\right\} 
	\end{equation}
	where $C_1>0$ is a universal constant.  
\end{theorem}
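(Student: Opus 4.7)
The plan is to express $\delta_\s$ as the supremum of a second-order Rademacher chaos indexed by a family of matrices and then to bound this supremum via the generic-chaining inequality for chaos processes (of Krahmer--Mendelson--Rauhut type).

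Because circular convolution is commutative, $\Fee\vct{x} = \tfrac{1}{\sqrt{\m}}\ROB(\eps\conv\vct{x}) = \mtx{A}(\vct{x})\eps$ where $\mtx{A}(\vct{x}) := \tfrac{1}{\sqrt{\m}}\ROB\mtx{X}$ and $\mtx{X}$ is the $\n\times\n$ circulant matrix with first column $\vct{x}$. Each row of $\mtx{X}$ has $\ell_2$-norm $\|\vct{x}\|_2$, so $\E\|\mtx{A}(\vct{x})\eps\|_2^2 = \|\mtx{A}(\vct{x})\|_F^2 = \|\vct{x}\|_2^2$. Writing $D_{\s,\n} := \{\vct{x}\in\R^\n : \|\vct{x}\|_0 \le \s,\ \|\vct{x}\|_2 = 1\}$,
\[
\delta_\s ~=~ \sup_{\vct{x}\in D_{\s,\n}} \bigl|\|\mtx{A}(\vct{x})\eps\|_2^2 - \E\|\mtx{A}(\vct{x})\eps\|_2^2\bigr|.
\]
For the set $\mathcal{A} := \{\mtx{A}(\vct{x}) : \vct{x}\in D_{\s,\n}\}$, the chaos inequality then bounds $\E[\delta_\s]$ by a universal constant times
\[
\gamma_2(\mathcal{A}, \spnorm{\cdot})^2 + d_F(\mathcal{A})\,\gamma_2(\mathcal{A}, \spnorm{\cdot}) + d_F(\mathcal{A})\,d_{2\to 2}(\mathcal{A}),
\]
where $\spnorm{\cdot}$ is the operator norm, $d_F$ and $d_{2\to 2}$ are the Frobenius- and operator-norm radii of $\mathcal{A}$, and $\gamma_2$ is Talagrand's functional.

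The radii are easily controlled: trivially $d_F(\mathcal{A}) = 1$, while diagonalizing $\mtx{X}$ in the Fourier basis gives $\spnorm{\mtx{X}} = \|\widehat{\vct{x}}\|_\infty \le \|\vct{x}\|_1 \le \sqrt{\s}$ for unit $\s$-sparse $\vct{x}$, so $d_{2\to 2}(\mathcal{A})\le \sqrt{\s/\m}$. Hence $d_F\cdot d_{2\to 2}\lesssim \sqrt{\s/\m}$, which matches the second summand of \eqref{eq:Ebound} up to the logarithmic factors produced by the chaining. The essential task is therefore to estimate $\gamma_2(\mathcal{A},\spnorm{\cdot})$ via Dudley's entropy integral
\[
\gamma_2(\mathcal{A},\spnorm{\cdot}) ~\lesssim~ \int_0^{d_{2\to 2}(\mathcal{A})} \sqrt{\log N(\mathcal{A},\spnorm{\cdot},u)}\,du,
\]
equivalently, to cover $D_{\s,\n}$ in the metric $(\vct{x},\vct{y})\mapsto \tfrac{1}{\sqrt{\m}}\spnorm{\ROB(\mtx{X}-\mtx{Y})}$. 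Two complementary covers will be combined: a coarse volumetric cover exploiting the $\s$-dimensional support at small scales (giving $\log N \lesssim \s\log(\n/u)$), and a refined empirical-method (Maurey-type) cover that randomly sparsifies the Fourier representation of $\vct{x}$ at large scales (giving $\log N$ that grows only logarithmically in $\n$). Integrating the minimum of the two bounds and squaring yields $\gamma_2^2 \lesssim \s^{3/2}\m^{-1}\log^{3/2}\n$, matching the first summand of \eqref{eq:Ebound}; the cross term $\gamma_2\,d_F$ does not dominate in either regime.

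The main obstacle is the covering-number step above. Because the subsampling operator $\ROB$ is arbitrary and does not combine transparently with the Fourier structure of $\mtx{X}$, one cannot directly invoke the standard Rudelson--Vershynin estimates that underlie the partial-Fourier RIP analysis; a genuinely new two-scale entropy bound on the operator-norm metric on $\mathcal{A}$ is required. Balancing the two covering strategies and carrying out the Dudley integration in this non-trivial metric is exactly what dictates the suboptimal $\s^{3/2}$ exponent in the first term of \eqref{eq:Ebound} and constitutes the principal technical content of the paper.
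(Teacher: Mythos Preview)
Your framework is genuinely different from the paper's and, if carried out correctly, is actually stronger. The paper does not use the Krahmer--Mendelson--Rauhut chaos bound (which in fact postdates this work); it applies an older Dudley-type inequality for homogeneous second-order chaos (Proposition~\ref{prop:Dudley:chaos}) involving \emph{two} entropy integrals, one with respect to $d_1(\vct{x},\vct{y}) = \|\mtx{Z}_{\vct{x}} - \mtx{Z}_{\vct{y}}\|$ and one with respect to $d_2(\vct{x},\vct{y}) = \|\mtx{Z}_{\vct{x}} - \mtx{Z}_{\vct{y}}\|_F$, where $\mtx{Z}_{\vct{x}}$ is the hollowed version of $\mtx{A}(\vct{x})^*\mtx{A}(\vct{x})$. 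The paper's key move is to pass to the Fourier domain and show that \emph{both} metrics are controlled by $\|\vct{x}-\vct{y}\|_\hi := \|\fx-\fy\|_\infty$; after that, the standard Rudelson--Vershynin covering numbers for $\s$-sparse vectors in the $\|\cdot\|_\hi$ norm (Proposition~\ref{prop:rvcover}) apply directly. The $\s^{3/2}$ term in~\eqref{eq:Ebound} arises entirely from the \emph{subexponential} integral $\int \log N(T,d_1,u)\,{\rm d}u$; the subgaussian integral produces the $\sqrt{\s/\m}\,\log\s\log\n$ term.

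Your diagnosis of the main obstacle is therefore off. The operator-norm covering you flag as requiring a ``genuinely new two-scale entropy bound'' is in fact immediate: since $\|\ROB(\mtx{X}-\mtx{Y})\| \le \|\mtx{X}-\mtx{Y}\| = \|\fx-\fy\|_\infty$, your metric on $\mathcal{A}$ is bounded pointwise by $\m^{-1/2}\|\vct{x}-\vct{y}\|_\hi$, the subsampling operator $\ROB$ drops out entirely, and the Rudelson--Vershynin estimates apply verbatim. Carrying this through gives $\gamma_2(\mathcal{A},\|\cdot\|) \lesssim \sqrt{\s/\m}\,\log \s \log \n$ (this is exactly the paper's computation of $I_2$ in Section~\ref{sec:d2integral}), whence $\gamma_2(\mathcal{A},\|\cdot\|)^2 \lesssim (\s/\m)\log^2\s\log^2\n$ --- linear in $\s$, not $\s^{3/2}$. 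So your route does not reproduce~\eqref{eq:Ebound} as you claim but rather the sharper bound that Krahmer--Mendelson--Rauhut later obtained with precisely this machinery; the $\s^{3/2}$ exponent here is an artifact of the older chaos inequality, coming from the $\gamma_1$-type integral that the KMR bound replaces by $\gamma_2^2$.
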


In particular, \eqref{eq:Ebound} implies that for given $\delta \in (0,1)$, we have $\E[\delta_\s]\leq \delta$ provided
	\begin{equation}
		\label{eq:m-mean}
		\m ~\geq~ C_2\max\left\{\delta^{-1}\s^{3/2}\log^{3/2}\n,~ 
		\delta^{-2}\s\log^2\n \log^2\s \right\},
	\end{equation}
where $C_2 >0$ is another universal constant.

Theorem~\ref{th:mean} tells us that partial random circulant matrices $\Phi$ obey \eqref{eq:RIPs} in expectation.  The following theorem states 
that the random variable $\delta_\s$ does not deviate much from its mean. 
\begin{theorem}
	\label{th:tail}
	Let $\delta_\s$ be as in Theorem~\ref{th:mean}.  Then for $0\leq\lambda\leq 1$
	\[
		\P\left(\delta_\s \geq \E[\delta_\s] + \lambda\right)
		~\leq~
		e^{-\lambda^2/\sigma^2}
		\quad\text{where}\quad\sigma^2 = C_3\frac{\s}{\m}\log^2\s \log^2 \n,
	\]
	for a universal constant $C_3 > 0$.
\end{theorem}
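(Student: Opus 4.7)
The plan is to establish tail concentration by exploiting both the convex-Lipschitz structure of $\Fee$ in $\eps$ and a chaos concentration inequality. Observe that $\Fee \vct{x} = \mtx{B}_{\vct{x}} \eps$, where $\mtx{B}_{\vct{x}} \in \R^{\m \times \n}$ has entries $(\mtx{B}_{\vct{x}})_{i,a} = \m^{-1/2} x_{(i-a) \bmod \n}$ for $i \in \OB$. Thus $F(\eps) := \sup\{\|\Fee(\eps)\vct{x}\|_2 : \|\vct{x}\|_0 \leq \s,\, \|\vct{x}\|_2 = 1\}$ is convex in $\eps$, being the supremum of Euclidean norms of linear functions. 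Its Lipschitz constant (w.r.t.\ the Euclidean norm) equals $\sup_{\vct{x}} \|\mtx{B}_{\vct{x}}\|_{2 \to 2}$. The rows and columns of $\mtx{B}_{\vct{x}}$ are cyclically shifted, $\m^{-1/2}$-scaled copies of $\vct{x}$, each with $\ell_1$-norm at most $\m^{-1/2}\|\vct{x}\|_1 \leq \sqrt{\s/\m}$; Schur's test then gives $\mathrm{Lip}(F) \leq \sqrt{\s/\m}$.

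Talagrand's concentration inequality for convex Lipschitz functions of Rademacher variables then yields $\P(F \geq \E F + u) \leq C \exp(-c u^2 \m/\s)$ for all $u \geq 0$. To transfer this to $\delta_\s$, I would write $\delta_\s = \max(F^2 - 1,\, 1 - \underline{F}^2)$ with $\underline{F} = \inf_{\vct{x}} \|\Fee \vct{x}\|_2$ over the same sparse unit sphere. Since $\E F^2 \leq 1 + \E \delta_\s$ implies $\E F \leq 1 + O(\E \delta_\s)$, the event $\{F^2 - 1 \geq \E \delta_\s + \lambda\}$ corresponds to a deviation of $F$ from $\E F$ of order $\lambda$, occurring with probability at most $\exp(-c \lambda^2 \m/\s)$.

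The main obstacle is the lower-side contribution $1 - \underline{F}^2$: because $\underline{F}$ is an infimum of convex functions, it is \emph{not} convex in $\eps$, so Talagrand's convex inequality does not apply directly. I would bypass this by recasting $\delta_\s = \sup_{\vct{x}} |\langle \eps, \mtx{A}_{\vct{x}} \eps \rangle|$, where $\mtx{A}_{\vct{x}} = \mtx{B}_{\vct{x}}^\top \mtx{B}_{\vct{x}}$ with its diagonal zeroed out (the subtraction reflects $\E \|\Fee\vct{x}\|_2^2 = \|\vct{x}\|_2^2$), and then invoking a Talagrand-type concentration inequality for suprema of Rademacher chaos of order~2. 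The inputs required are the Frobenius estimate $\sup_{\vct{x}} \|\mtx{A}_{\vct{x}}\|_F^2 \leq \s/\m$---derived via the identity $(\mtx{B}_{\vct{x}} \mtx{B}_{\vct{x}}^\top)_{ij} = \m^{-1} c_{\vct{x}}(i-j)$ (where $c_{\vct{x}}$ denotes the cyclic autocorrelation of $\vct{x}$) together with the Parseval-based estimate $\sum_k c_{\vct{x}}(k)^2 \leq \s$ for $\s$-sparse unit $\vct{x}$---and the operator-norm estimate $\sup_{\vct{x}} \|\mtx{A}_{\vct{x}}\|_{2\to 2} \lesssim \s/\m$. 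Substituting the mean bound $\E[\delta_\s]$ from Theorem~\ref{th:mean} into the resulting chaos concentration statement then produces the $\log^2 \s \log^2 \n$ factors in the stated variance proxy $\sigma^2$, and the hypothesis $\lambda \in [0,1]$ keeps us in the subgaussian regime where $V/U \gtrsim 1$.
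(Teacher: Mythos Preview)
Your overall strategy in the second half—writing $\delta_\s$ as the supremum of a homogeneous order-$2$ Rademacher chaos and appealing to a Talagrand/Boucheron--Lugosi--Massart concentration inequality—is exactly what the paper does. The operator-norm bound $\sup_{\vct{x}\in T}\|\mtx{A}_{\vct{x}}\|\lesssim \s/\m$ and the representation $\mtx{A}_{\vct{x}}=\mtx{B}_{\vct{x}}^{\!\top}\mtx{B}_{\vct{x}}-\operatorname{diag}(\cdots)$ also match the paper's argument.

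The gap is in the variance parameter. The concentration inequality for \emph{suprema} of chaos (Theorem~17 of Boucheron--Lugosi--Massart, restated in the paper as Proposition~\ref{th:boucheron}) does \emph{not} accept the uniform Frobenius norm $\sup_{\vct{x}}\|\mtx{A}_{\vct{x}}\|_F^2$ as its variance proxy; it requires
\[
V^2 \;=\; \E\,\sup_{\vct{x}\in T}\,\|\mtx{A}_{\vct{x}}\,\eps\|_2^2,
\]
an \emph{expected supremum} of a vector-valued Rademacher process. For a single fixed $\vct{x}$ one has $\E\|\mtx{A}_{\vct{x}}\eps\|_2^2=\|\mtx{A}_{\vct{x}}\|_F^2$, but moving the supremum inside the expectation can make $V^2$ substantially larger than $\sup_{\vct{x}}\|\mtx{A}_{\vct{x}}\|_F^2$; there is no Hanson--Wright-style inequality for suprema that lets you skip this step. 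The paper controls $V^2$ via a vector-valued Dudley inequality (Proposition~\ref{prop:vectorDudley}) with respect to the metric $d_2(\vct{x},\vct{y})=\|\mtx{A}_{\vct{x}}-\mtx{A}_{\vct{y}}\|_F$, which reduces to the same entropy integral as the subgaussian part of the mean bound; \emph{that} is where the factor $\log^2\!\s\,\log^2\!\n$ actually comes from. Your suggestion that these logarithms arise from ``substituting the mean bound $\E[\delta_\s]$'' is not correct: the mean does not enter $V^2$ in the Boucheron--Lugosi--Massart inequality, and indeed if your claimed variance $\s/\m$ were admissible you would have established a strictly sharper tail bound than the theorem asserts—which should be a warning sign.
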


The proof of Theorem \ref{th:mean} is connected with the approach for random partial Fourier matrices \cite{rudelson08sp,ra10}.  
We use a version of the classical Dudley inequality for Rademacher chaos that bounds the expectation of its supremum by the maximum of two entropy integrals that involve covering numbers with respect to two different metrics.  We use elementary ideas from Fourier analysis to provide bounds for these metrics.  This reduction allows us to exploit covering number estimates from the RIP analysis for partial Fourier matrices~\cite{rudelson08sp,ra10} to complete the argument.

\subsection{Discussion}

In essence, the bound \eqref{eq:m-mean} exhibits the scaling behavior $\m\gtrsim\s^{3/2}\log^{3/2}\n$.  This result improves on the best available result for this type of matrix~\cite{haupt10to}, but it falls short of the linear scaling in $\s$ that is typical in the compressive sensing literature.  The bottleneck in our argument appears to be the bound on the ``subexponential integral'' (Section~\ref{sec:d1integral}).  It is not clear how to significantly improve \eqref{eq:d1bound} or the covering numbers from Proposition~\ref{prop:rvcover}, so tightening this bound to $m\gtrsim\s\log^p\n$ for some constant $p$ will probably require a different approach.
Indeed, it is known that the central tool in this paper, the Dudley-type inequality for Rademacher chaos stated in Proposition~\ref{prop:Dudley:chaos}, is not sharp for all examples \cite{leta91,ta05-2}.  It might be that we are facing one of these cases. 

The statement of Theorem~\ref{th:mean} uses a very specific model for the measurement matrix based on a partial random convolution with a generator $\vct{\phi}$ given by a Rademacher sequence.
We have restricted our discussion to this example to simplify the exposition.  Analogous results for other types of random generator sequences can be derived using the same type of analysis.  In particular, one might consider the following variations.
\begin{description}
	\item[Gaussian generating sequence.]  We can take the sequence $\vct{\phi}$ to be iid Gaussian with 
	zero mean and variance one.  In this case, we can establish \eqref{eq:m-mean} by repeating the same steps because the central tool, Proposition~\ref{prop:Dudley:chaos}, holds for Gaussian chaos processes as well as Rademacher chaos processes (perhaps with different constants).  It is possible that, for this case, there is an extra factor of $\log\n$ in the denominator of the variance $\sigma^2$ in the tail bound in Theorem~\ref{th:tail}.
	
	\item[Fourier-domain randomness.]  The generating sequence can also be iid Bernoulli in the Fourier domain.  That is, we can take $\vct{\phi} = \sqrt{\n} \FT^{-1}\eps$, where $\FT$ is the Fourier matrix (see below) and $\eps$ is a Rademacher sequence.
The analysis in this case is almost identical, except that we take the Fourier-domain expression \eqref{eq:Gx} for the random process as our starting point. 

	This type of model was analyzed in \cite{romberg09co} for the case where also $\OB$ is chosen randomly; Theorem~\ref{th:mean} gives us a result when $\OB$ is \emph{arbitrary}.  Our model is also related to the {\em random demodulator} system analyzed in \cite{tropp10be}.  If we switch the roles of time and frequency, we can interpret the measurement system $\Fee$ as taking a signal that is sparse in the Fourier domain, multiplying it pointwise by a Rademacher sequence in the time domain (the random demodulation), and then recording the frequency components indexed by the set $\OB$.  If $\OB$ consists of a sequence of consecutive indices, then this operation is equivalent with random demodulation followed by bandpass filtering and, finally, acquiring $\m$ uniformly spaced samples.  (In our model, we are observing the Fourier transform of the samples rather than the samples themselves.)  This observation broadens the ``randomly demodulate, integrate, then subsample'' architecture of \cite{tropp10be} to ``randomly demodulate, bandpass filter, then subsample''.
	
	\item[Complex generating sequences.]  It is also possible to take the generating sequence to be either a complex Steinhaus or complex Gaussian sequence.  The proofs above remain essentially the same, the main difference would be establishing complex versions of Proposition~\ref{prop:Dudley:chaos} and Theorem~\ref{th:boucheron} (some related results for Steinhaus sequences can be found in \cite[Ch.\ 4]{ra10}).
	
	\item[Toeplitz matrices.]  We can obtain analogous results for sections of a random Toeplitz matrix because a Toeplitz matrix can be embedded in a circulant matrix of twice the dimension.
\end{description}

{\bf Applications.}
From an engineering point of view, Theorems~\ref{th:mean} and \ref{th:tail} tell us that we can identify a system with a sparse impulse response by probing it with a random input sequence and then taking a small number of samples of the output.  This type of system identification, or \emph{deconvolution}, problem is common task in signal processing, and the fact that it can be performed from a small number of samples allows for some interesting new design considerations.  

For example, in radar imaging a transmitter sends out a pulse, which reflects off of a number of targets, and then a receiver observes this superposition of pulses (which can be modeled as the original pulse convolved with an unknown sparse range profile).  The resolution to which we can resolve target locations is determined by the bandwidth of the pulse; to reconstruct the range profile digitally, then, the system requires an analog-to-digital converter (ADC) whose sample rate is on the same order as this bandwidth.  Typical pulse bandwidths are in the gigahertz range, and ADCs that operate at this rate are expensive and low resolution.  Indeed, lack of good high-speed ADCs has ``historically slowed the introduction of digital techniques into radar signal processing'' \cite[Chap. 1.2]{richards05fu}.  Theorem~\ref{th:mean} suggests that sample rate of the ADC depends primarily on the sparsity of the range profile, rather than the bandwidth of the pulse, which would allow us to achieve the same resolution with less expensive and more accurate hardware.  See~\cite{romberg09co,herman09hi} for more discussion of how convolution with a random waveform followed by subsampling can be applied to active imaging problems.

Another application of sparse recovery from a random convolution is increasing the field-of-view of a camera using a coded aperture \cite{marcia08fa,gottesman89ne}.  Here, we can imagine an optical architecture where a large image is convolved with a random code and then a small spatial portion of the result is sampled on a compact pixel array.  If the image is sparse enough, Theorem~\ref{th:mean} suggests that the entire field-of-view can be reconstructed to full resolution from this small set of observations. 

{\bf Dimensionality reduction.}
The Johnson--Lindenstrauss lemma is an important tool for dimensionality reduction. It establishes
that the pairwise distances between points in a high-dimensional space are approximately
preserved after we project the points into a significantly lower-dimensional space using a random linear map.
While Gaussian or Bernoulli matrices were initially used for this task,
more recent analyses show that structured random matrices also work.
In particular, Hinrichs and Vyb{\'i}ral \cite{hivy10} have shown that one can
perform dimension reduction using partial random circulant matrices with
randomized column signs.
These matrices are computationally efficient because they can be applied
using the FFT algorithm. 
Krahmer and Ward subsequently showed that
a matrix satisfying the RIP provides a Johnson--Lindenstrauss embedding
if one randomizes the column signs \cite{krwa10}. 
Together with our result on the RIP of partial random
circulant matrices, the work of Krahmer and Ward improves on a related result
by Vyb{\'i}ral \cite{vy10}.  See \cite{krwa10} for a precise statement.

\subsection{Relationship with previous work}

Numerical results for compressive sampling by random convolution followed by subsampling appear in~\cite{tropp06fi}.  In this paper, the measurement process convolves with a pulse of length $B$ and then extracts $\m$ equally spaced samples from the convolution.  The effectiveness of this strategy is quantified empirically as a function of pulse length and the undersampling ratio: for long enough pulses, the number of samples required to reconstruct a signal is approximately linear in the sparsity.

Later, theoretical results for compressed sensing using random convolution were developed in~\cite{romberg07co,romberg09co}.  In these works, the measurement model is slightly different; the generating sequence $\vct{\phi}$ is the discrete Fourier transform of an iid sequence of random signs.
Convolution with this spectrally random sequence is followed by sampling at \emph{random} locations (as opposed to the arbitrary set $\OB$ we are considering in this work).  This process is universally efficient, in that an $\s$-sparse signal can be reconstructed from $\m\gtrsim\s\log^p\n$ samples independent of the orthobasis in which it is sparse.

The first theoretical results for the measurement model we are using in this paper, convolution with a iid sequence followed by subsampling at fixed locations, can be traced to~\cite{bajwa07to,haupt10to}.  They show that the matrix $\Fee$ define in \eqref{eq:Phidef} has the RIP of order $\s$ with high probability when $m \gtrsim \s^2\log\n$.  These works are couched in the language of channel estimation, and so the results are stated explicitly for the case where $\OB$ contains consecutive indices.  Nevertheless, it appears that the same proof strategy extends to arbitrary $\OB$.  Theorems~\ref{th:mean} and \ref{th:tail} refine the sufficient condition for this model to $m\gtrsim \s^{3/2}\log\n$ using a completely different mathematical analysis.

A nonuniform recovery result for partial random circulant matrices has been established in \cite{rauhut09ci,ra10}.  (See \cite{ra10} for a discussion of the difference between nonuniform and uniform recovery guarantees.) 
Suppose that the number $\m$ of measurements satisfies $\m\gtrsim\s\log^2\n$.  Let $\vct{x}_0$ be an $\s$-sparse vector $\vct{x}_0$ whose nonzero components have random signs. With high probability, we can recover this vector exactly via $\ell_1$-minimization using the measurements $\vct{y} = \Fee \vct{x}_0$, where the partial random circulant matrix $\Fee$ is drawn independently from $\vct{x}_0$.  The proof involves duality for convex optimization, and it does not establish any type of RIP.  As a consequence, this work does not offer any guarantees about stability in the presence of measurement noise or robustness when the signal $\vct{x}_0$ is not exactly $\s$-sparse, in contrast with the RIP recovery bound~\eqref{rec:stability}.

\section{Proof of Theorem~\ref{th:mean} (Expectation)}
\label{sec:MeanProof}

We develop a method for estimating the restricted isometry constant $\delta_\s$ for a fixed sparsity level $\s$.  Let $T$ denote the set of all $\s$-sparse signals in the Euclidean unit ball:
\begin{equation} \label{eqn:T-set}
	T := \{ \vct{x} \in\R^\n: \|\vct{x}\|_0\leq\s,~\|\vct{x}\|^2_2 \leq 1\}.
\end{equation}
Define a function $\spnorm{\cdot}$ on Hermitian $\n\times\n$ matrices via the formula
\begin{equation}
	\nonumber
	\spnorm{\mtx{A}} := \sup_{\vct{x}\in T} \left|\vct{x}^*\mtx{A} \vct{x} \right|.
\end{equation}
This function can be extended to a norm on the set of all square matrices.  We work with the quantity $\spnorm{\Fee^*\Fee-\mathbf{I}}$ because
\begin{equation}
	\label{eq:snormip}
	\spnorm{\Fee^*\Fee-\mathbf{I}}
	~=~
	\sup_{\vct{x}\in T} \left|\<(\Fee^*\Fee - \mathbf{I})\vct{x},\vct{x}\>\right|
	~=~
	\sup_{\vct{x}\in T}\left|~\|\Fee\vct{x}\|^2_2 - \|\vct{x}\|^2_2 ~\right|
	~=~ \delta_\s.
\end{equation}

Let $\Sh$ be the cyclic shift down operator on column vectors in $\R^\n$.  Applying the power $\Sh^k$ to $\vct{x}$ cycles $\vct{x}$ downward by $k$ coordinates: $(\Sh^k \vct{x})_\ell = x_{\ell \ominus k}$, where $\ominus$ is subtraction modulo $\n$.  Note that $(\Sh^k)^* = \Sh^{-k}=\Sh^{\n-k}$.  We can now express $\Fee$ as a random sum of shift operators,
\[
	\Fee = \frac{1}{\sqrt{\m}}\sum_{k=1}^{\n}\ek \ROB \Sh^k.
\]
It follows that
\begin{equation}
	\label{eq:PhitPhi-I}
	\Fee^*\Fee - \mathbf{I} ~=~ 
	\frac{1}{\m}\sum_{k\neq\ell} \ek\el \, \Sh^{-k}\ROB^*\ROB \Sh^{\ell} 
	~=~ \frac{1}{\m}\sum_{k\neq\ell} \ek\el \, \Sh^{-k}\mtx{P}_\OB \Sh^{\ell}, 
\end{equation}
where $\mtx{P}_\OB = \ROB^* \ROB$ is the $\n\times\n$ diagonal projector onto the coordinates in $\OB$.  Applying $\mtx{P}_{\OB}$ to $\vct{x}$ preserves the values of $\vct{x}$ on the set $\OB$ while setting the values outside of $\OB$ to zero.  

Combining \eqref{eq:snormip} and \eqref{eq:PhitPhi-I}, we can view the restricted isometry constant as the supremum of a random process indexed by the set $T$:
\begin{equation} \label{eqn:delta-Gx}
	\delta_\s ~=~ \sup_{\vct{x}\in T} |G_{\vct{x}}| \quad\text{where}\quad
	G_{\vct{x}} ~=~  \frac{1}{\m}\sum_{k\neq\ell} \ek\el \,
	\vct{x}^*\Sh^{-k}\mtx{P}_\OB \Sh^{\ell}\vct{x}
\end{equation}
We must bound the expected supremum of this process.

\subsection{Fourier representation of the random process}

One of the key ideas in this work is to re-express the random process $G_{\vct{x}}$ in the Fourier domain.  Let $\FT$ be the $\n\times\n$ discrete Fourier transform matrix whose entries are given by the expression
\[
	{F}({\omega,\ell}) ~:=~ {\rm e}^{-\iunit 2\pi \, \omega\ell /\n},\quad 0\leq\omega,\ell\leq\n - 1.
\]
Note that we employ the electrical engineering convention that $\FT$ is unnormalized.  The hat symbol indicates the Fourier transform of a vector: $\fx := \FT\vct{x}$.  Recall that a shift in the time domain followed by a Fourier transform can also be written as a Fourier transform followed by a frequency modulation:
\[
	\FT \Sh^k ~=~ \Mod^k \FT,
\]
where $\Mod$ is the diagonal matrix with entries ${M}(\omega,\omega) := {\rm e}^{-\iunit 2\pi \, \omega/\n}$ for $0 \leq \omega \leq \n - 1$.

The random process $G_{\vct{x}}$ has the Fourier-domain representation
\begin{equation}
	\label{eq:Gx}
	G_{\vct{x}} ~=~ \frac{1}{\m}\sum_{k\neq\ell} \ek\el \,
	\fx^*\Mod^{-k}\PO \Mod^{\ell}\fx,
\end{equation}
where $\PO=\n^{-1}\FT \mtx{P}_\OB \FT^{-1}$. 
The matrix $\PO$ has several nice properties that we use in the sequel.

\begin{lemma}
	\label{lem:projector}
	The $\n\times\n$ matrix $\PO = n^{-1} \FT \mtx{P}_{\OB} \FT^{-1}$ has the following properties: 
	\begin{enumerate}
		
		\item $\PO$ is circulant and conjugate symmetric.
		
		\item Along the diagonal $\widehat{P}_{\OB}(\omega,\omega)=\m/\n^2$, and off the diagonal $|\POab|\leq\m/\n^2$.

		\item Since the rows and columns of $\PO$ are circular shifts of one another,
		\[
			\sum\nolimits_{\oa}\left|\POab\right|^2
			~=~ \sum\nolimits_{\ob}\left|\POab\right|^2 
			~=~ \|\PO\|_F^2/\n ~=~ \m/\n^3.
		\]
		
		\item $\PO$ has exactly $\m$ nonzero eigenvalues, each of which is equal to $1/\n$.  As such, $\PO$ has spectral norm $\|\PO\|=1/\n$ and Frobenius norm $\|\PO\|_F^2 = \m/\n^2$.

	\end{enumerate}
\end{lemma}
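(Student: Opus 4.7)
The plan is to use the explicit entry formula for $\PO$ and then repeatedly exploit the fact that $\FT/\sqrt{\n}$ is unitary. Since $\mtx{P}_{\OB}$ is diagonal with a $1$ in position $(\ell,\ell)$ precisely when $\ell\in\OB$, and since $\FT^{-1} = \frac{1}{\n}\FT^*$, I would first compute
\[
\POab ~=~ \frac{1}{\n^2}\sum_{\ell\in\OB} {\rm e}^{-\iunit 2\pi(\oa-\ob)\ell/\n}.
\]
From this single identity almost every claim drops out. Property~1 is immediate: the entry depends on $\oa-\ob$ only modulo $\n$, which makes $\PO$ circulant, and swapping $\oa\leftrightarrow\ob$ conjugates the exponential, which gives conjugate symmetry. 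Property~2 is also immediate: setting $\oa=\ob$ collapses each exponential to $1$ and sums to $|\OB|=\m$, and the triangle inequality bounds the off-diagonal entries by $\m/\n^2$.

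For the Frobenius-norm and eigenvalue facts in Properties~3 and~4, the key observation is that $U := \FT/\sqrt{\n}$ is unitary, so
\[
\PO ~=~ \frac{1}{\n}\,U\, \mtx{P}_{\OB}\, U^*.
\]
Unitary conjugation preserves the spectrum, so $\PO$ inherits the eigenvalues of $\frac{1}{\n}\mtx{P}_{\OB}$: exactly $\m$ of them equal $1/\n$ and the remaining $\n-\m$ vanish. This yields $\|\PO\|=1/\n$ (the matrix is Hermitian positive semidefinite by Property~1, so the spectral norm equals the top eigenvalue) and $\|\PO\|_F^2 = \m/\n^2$ (the sum of squared eigenvalues). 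Property~3 then follows by the general fact that a circulant matrix has constant row and column $\ell_2$-norms: each of the $\n$ rows carries the same squared-magnitude sum, so that sum equals $\|\PO\|_F^2/\n = \m/\n^3$, and similarly for columns.

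I do not expect a genuine obstacle in any step: the whole lemma is really a bookkeeping exercise combining the explicit Fourier-entry formula with the unitary change of basis $U=\FT/\sqrt{\n}$. The one place to be slightly careful is the normalization convention: since the paper uses the engineering convention in which $\FT$ is unnormalized, one has to track the factor $\n^{-1}$ carefully (it appears in the definition of $\PO$ and in $\FT^{-1}=\n^{-1}\FT^*$), but once the identity $\PO = \frac{1}{\n} U \mtx{P}_{\OB} U^*$ is in hand, the remaining arithmetic is routine.
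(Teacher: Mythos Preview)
Your proposal is correct and follows essentially the same approach as the paper's proof, which simply notes that $\mtx{P}_{\OB}$ is a 0--1 diagonal matrix, that being diagonalized by $\FT$ forces $\PO$ to be circulant, and that the similarity transform preserves eigenvalues up to the factor $\n^{-1}$. Your version is slightly more explicit---you write out the entry formula $\POab = \n^{-2}\sum_{\ell\in\OB}{\rm e}^{-\iunit 2\pi(\oa-\ob)\ell/\n}$ and the unitary conjugation $\PO = \n^{-1}U\mtx{P}_{\OB}U^*$---but the underlying ideas are identical and your bookkeeping of the normalization is correct.
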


\begin{proof}
These properties follow almost immediately from the fact that $\mtx{P}_{\OB} = \ROB^* \ROB$ is a diagonal matrix with 0--1 entries.  The matrix $\PO$ inherits conjugate symmetry from $\mtx{P}_{\OB}$.  The matrix $\PO$ is circulant because it is diagonalized by the Fourier transform.  Since we form $\PO$ by applying a similarity transform to $\mtx{P}_{\OB}$, they have the same eigenvalues modulo the scale factor $n^{-1}$.  The remaining points follow from the simple calculations described in the statement of the lemma.
\end{proof}

\subsection{Integrability of chaos processes}

For the next step in the argument, we must rewrite the random process~\eqref{eq:Gx} again.  Let $\vct{\eps} = [\varepsilon_0, \dots, \varepsilon_{\n - 1}]^*$.  The process can now be expressed as a quadratic form:
\begin{equation} \label{eqn:2-chaos}
G_{\vct{x}} ~=~ \< \eps, \mtx{Z}_{\vct{x}} \,\eps \>
\quad\text{where}\quad \vct{x} \in T.
\end{equation}
The matrix $\mtx{Z}_{\vct{x}}$ has entries
\[
	Z_{\vct{x}}(k,\ell) ~=~ \begin{cases} 
	\m^{-1}\fx^* \Mod^{-k}\PO \Mod^{\ell}\fx, & k\neq\ell \\
	0, & k=\ell
	\end{cases}.
\]
A short calculation verifies that this matrix can be written compactly.
\begin{equation}
	\label{eq:Zx}
	\mtx{Z}_{\vct{x}} ~=~ \frac{1}{\m}\left(\FT^*\fX^*\PO\fX \FT - \operatorname{diag}(\FT^*\fX^*\PO\fX \FT) \right),
\end{equation}
where $\fX := \operatorname{diag}(\fx)$ is the diagonal matrix constructed from the vector $\fx$.  The term \emph{homogeneous second-order chaos} is used to refer to a random process $G_{\vct{x}}$ of the form~\eqref{eqn:2-chaos} where each matrix $\mtx{Z}_{\vct{x}}$ is conjugate symmetric and hollow, i.e., has zeros on the diagonal.

To bound the expected supremum of the random process $G_{\vct{x}}$ over the set $T$, we apply a version of Dudley's inequality that is specialized to this setting.  Define two pseudo-metrics on the index set $T$:
$$
d_1(\vct{x}, \vct{y}) ~:=~ \| \mtx{Z}_{\vct{x}} - \mtx{Z}_{\vct{y}} \|
\quad\text{and}\quad
d_2(\vct{x}, \vct{y}) ~:=~ \| \mtx{Z}_{\vct{x}} - \mtx{Z}_{\vct{y}} \|_{\rm F}.
$$
Let $N( T, d_i, u )$ denote the minimum number of balls of radius $u$ in the metric $d_i$ that we need to cover the set $T$.

\begin{proposition}[Dudley's inequality for chaos] \label{prop:Dudley:chaos}
Suppose that $G_{\vct{x}}$ is a homogeneous second-order chaos process indexed by a set $T$.  Fix a point $\vct{x}_0 \in T$.  There exists a universal constant ${\rm K}$ such that
	\begin{equation}
		\label{eq:dudley-chaos}
		\E\sup_{\vct{x} \in T} |G_{\vct{x}}-G_{\vct{x}_0}|
			~\leq~ {\rm K} \max\left\{ 
		\int_0^\infty \log N(T,d_1,u)\,{\rm d}u,
		\int_0^\infty \sqrt{\log N(T,d_2,u)}\,{\rm d}u
	     \right\}.
	\end{equation}
\end{proposition}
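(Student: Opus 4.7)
The plan is to combine a Bernstein-type deviation inequality for Rademacher chaos (a Hanson--Wright estimate) with a chaining argument in the style of Dudley. The mixed tail behaviour of the increments of $G_{\vct{x}}$ is precisely what forces the two pseudo-metrics $d_1$ and $d_2$ to appear.

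First I would establish the following tail bound on the increments. Since $\mtx{Z}_{\vct{x}} - \mtx{Z}_{\vct{y}}$ is conjugate symmetric and has zero diagonal, the chaos $G_{\vct{x}} - G_{\vct{y}} = \langle \eps, (\mtx{Z}_{\vct{x}} - \mtx{Z}_{\vct{y}})\eps\rangle$ is centred, and by Hanson--Wright for Rademacher variables (which follows, for instance, from decoupling followed by a Khintchine-type moment estimate, or from Talagrand's quadratic deviation inequality) one has, for all $t>0$,
$$
\P\bigl(|G_{\vct{x}}-G_{\vct{y}}| \geq t\bigr)
~\leq~ 2\exp\!\left(-c\min\!\left\{\frac{t^{2}}{d_2(\vct{x},\vct{y})^{2}},\,\frac{t}{d_1(\vct{x},\vct{y})}\right\}\right).
$$
Thus the increments are sub-Gaussian with respect to $d_2$ and sub-exponential with respect to $d_1$.

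Next I would carry out a standard chaining. For each integer $k\geq 0$, fix a subset $T_k\subseteq T$ with $|T_k|\leq \min\{N(T,d_1,2^{-k}),N(T,d_2,2^{-k})\}$ that is a $2^{-k}$-net in \emph{both} metrics (obtained by refining nets in each metric; the logarithms of the cardinalities then satisfy $\log|T_k|\leq \log N(T,d_1,2^{-k})+\log N(T,d_2,2^{-k})$, which is harmless in the final integrals). Let $\pi_k(\vct{x})$ be a nearest point to $\vct{x}$ in $T_k$, with $\pi_0(\vct{x})=\vct{x}_0$, and telescope
$$
G_{\vct{x}}-G_{\vct{x}_0} ~=~ \sum_{k\geq 1}\bigl(G_{\pi_k(\vct{x})}-G_{\pi_{k-1}(\vct{x})}\bigr).
$$
At level $k$ there are at most $|T_k|\cdot|T_{k-1}|$ distinct increments, each with $d_2 \leq 3\cdot 2^{-k}$ and $d_1\leq 3\cdot 2^{-k}$. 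Setting the deviation scale at level $k$ to
$$
t_k ~=~ C\bigl(2^{-k}\sqrt{\log|T_k|}+2^{-k}\log|T_k|\bigr),
$$
a union bound using the tail estimate above gives a summable failure probability and simultaneously uses whichever of the two exponents is binding. Summing $t_k$ yields
$$
\E\sup_{\vct{x}\in T}|G_{\vct{x}}-G_{\vct{x}_0}|
~\lesssim~ \sum_{k\geq 1}2^{-k}\sqrt{\log|T_k|}+\sum_{k\geq 1}2^{-k}\log|T_k|,
$$
and the usual conversion of dyadic sums to Dudley integrals (using the monotonicity of $u\mapsto N(T,d_i,u)$) produces $\int_0^\infty\sqrt{\log N(T,d_2,u)}\,{\rm d}u+\int_0^\infty\log N(T,d_1,u)\,{\rm d}u$, which is equivalent, up to a universal constant, to the maximum in \eqref{eq:dudley-chaos}.

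The main obstacle is making the chaining argument respect the two different regimes of the tail bound simultaneously. Because $d_1(\vct{x},\vct{y})\leq d_2(\vct{x},\vct{y})$ but the reverse is false in general, the Frobenius covering alone cannot control the sub-exponential contribution: at each scale one has to use both nets and combine the Gaussian-range ($t^2/d_2^2$) and exponential-range ($t/d_1$) parts of the tail. Once the mixed-scale union bound is set up correctly, the remainder is the standard Dudley conversion. A reference for essentially this statement is Talagrand's treatment of second-order chaos processes (see, e.g., \cite{ta05-2}), which provides the Bernstein-type moment bound used in the first step and the chaining framework used in the third.
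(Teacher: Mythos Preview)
Your overall plan---Hanson--Wright tail for the increments followed by chaining---is sound, and the paper's own proof is in the same spirit (it quotes the Talagrand reference you mention at the end). But the chaining you sketch does \emph{not} yield the stated inequality; it proves something strictly weaker. The problem is the single chain $T_k$ chosen to be a $2^{-k}$-net in \emph{both} metrics. First, the parenthetical ``$|T_k|\leq\min\{N(T,d_1,2^{-k}),N(T,d_2,2^{-k})\}$'' is impossible: a $2^{-k}$-net in $d_2$ already needs at least $N(T,d_2,2^{-k})$ points. Your immediate correction to $\log|T_k|\leq\log N(T,d_1,2^{-k})+\log N(T,d_2,2^{-k})$ is what the product construction actually gives, but it is \emph{not} ``harmless in the final integrals.'' When you sum the sub-exponential contribution $\sum_k 2^{-k}\log|T_k|$, the term $\log N(T,d_2,2^{-k})$ survives and converts to $\int_0^\infty \log N(T,d_2,u)\,{\rm d}u$, not $\int_0^\infty \log N(T,d_1,u)\,{\rm d}u$. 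Since $d_1\leq d_2$ (spectral $\leq$ Frobenius), one has $N(T,d_1,u)\leq N(T,d_2,u)$, so your bound is genuinely larger than the one claimed. In the application of Section~\ref{sec:d1integral} this distinction is exactly what is at stake: the improvement from $d_2$ to $d_1$ in the sub-exponential integral is the mechanism that produces the $s^{3/2}$ scaling rather than something worse.

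The fix is precisely the ``main obstacle'' you identify but do not resolve: the two metrics must be chained at \emph{independent} scales, so that the cardinality paying for the sub-exponential tail is $N(T,d_1,\cdot)$ and the cardinality paying for the sub-Gaussian tail is $N(T,d_2,\cdot)$. A single chain refined in both metrics couples the two and forces the larger covering number into both terms. The paper avoids this by not chaining the Rademacher chaos directly: it decouples, replaces Rademachers by Gaussians via the contraction principle, and then invokes Talagrand's bound for decoupled Gaussian chaos \cite[Th.~1.2.7]{ta05-2}, which already delivers $\gamma_1(T,d_1)+\gamma_2(T,d_2)$ via generic chaining with two separate admissible sequences; the Dudley integrals then follow by the standard majorization of $\gamma_\alpha$. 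If you want a self-contained chaining proof, you need either two interlaced chains (one per metric) or the admissible-sequence machinery; a single dyadic net cannot decouple the two regimes.
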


Proposition~\ref{prop:Dudley:chaos} is based on the idea that the random process has a subexponential part, whose variation is controlled by the integral with respect to $d_1$, and a subgaussian part, whose variation is controlled by the integral with respect to $d_2$.  This result appears in \cite[Thm. 11.22]{leta91} and \cite[Thm. 2.5.2]{ta05-2}.  Our statement of the proposition looks different from the versions presented in the literature, so we sketch the derivation in Appendix~\ref{apx:DudleyChaos}.

\subsection{The subgaussian integral}
\label{sec:d2integral}

In this section, we develop an estimate for the second integral in~\eqref{eq:dudley-chaos}.  To do so, we need a simpler bound for the metric $d_2$.  First, note that
\begin{align*}
d_2(\vct{x}, \vct{y})
	&~\leq~ \frac{1}{\m} \|\FT^*(\fX^*\PO\fX - \fY^*\PO\fY)\FT\|_{\rm F}
	~=~ \frac{\n}{\m} \|\fX^*\PO\fX - \fY^*\PO\fY\|_{\rm F} \\
	&~=~ \frac{\n}{2\m} \|(\fX - \fY)^* \PO (\fX + \fY) +
		(\fX + \fY)^* \PO (\fX - \fY) \|_{\rm F} \\
	&~=~ \frac{\n}{2\m} \left[\sum\nolimits_{\oa,\ob}
	|\POab|^2\cdot|\widehat{Q}_{\vct{x},\vct{y}}(\oa,\ob)|^2\right]^{1/2}
\end{align*}
where we have written
$$
	\widehat{Q}_{\vct{x},\vct{y}}(\oa,\ob) ~:=~ (\fxa+\fya)(\fxb-\fyb)^* + (\fxa-\fya)(\fxb+\fyb)^*.
$$
The first inequality arises when we re-introduce the diagonal entries of the hollow matrices.  The next identity follows from the unitary invariance of the Frobenius norm.  The second line is the polarization identity, and we obtain the last line by expressing the Frobenius norm in terms of coordinates.

Define the $\|\cdot\|_\hi$ norm to be the $\ell_\infty$ norm in the discrete Fourier domain
\[
	\|\vct{x}\|_\hi ~:=~ \|\fx\|_\infty.
\]
We can bound the entries of $\widehat{\mtx{Q}}_{\vct{x},\vct{y}}$ in terms of this norm.  If we abbreviate $\vct{v} ~=~ \vct{x} + \vct{y}$,
$$
	|\widehat{Q}_{\vct{x},\vct{y}}(\oa,\ob)|
 	~\leq~ \|\vct{x}-\vct{y}\|_\hi\cdot(|\widehat{v}(\oa)| + |\widehat{v}(\ob)|).
$$
Introduce the latter bound into our estimate for the metric and apply the triangle inequality to reach
$$ 	\label{eq:d2a}
d_2(\vct{x},\vct{y})
	~\leq~ \frac{\n\|\vct{x}-\vct{y}\|_\hi}{2\m} \left[
	\left( \sum\nolimits_{\oa,\ob}|\POab|^2 \, |\widehat{v}(\oa)|^2 \right)^{1/2} + 
	\left( \sum\nolimits_{\oa,\ob}|\POab|^2 \, |\widehat{v}(\ob)|^2 \right)^{1/2} \right].
$$
Let us examine the first sum more closely.
$$
\left( \sum\nolimits_{\oa,\ob} |\POab|^2 \, |\widehat{v}(\oa)|^2 \right)^{1/2}
	~=~ \left( \frac{\m}{\n^3} \| \widehat{\vct{v}} \|_2^2 \right)^{1/2}
	~\leq~ \sqrt{\frac{\m}{\n^3}} ( \| \widehat{\vct{x}} \|_2 + \| \widehat{\vct{y}} \|_2 )
	~=~ \frac{2\sqrt{\m}}{\n}.
$$
The first identity follows from Point 3 of Lemma~\ref{lem:projector}.  The second relation is the the triangle inequality.  The last identity is a consequence of the fact that $\vct{x}$ and $\vct{y}$ have unit energy together with Parseval's identity.  An analogous argument applies to the second sum.  We conclude that
$$
d_2(\vct{x},\vct{y}) \leq \frac{2 \| \vct{x} - \vct{y} \|_{\hi}}{\sqrt{m}}
	= 2 \sqrt{\frac{s}{m}} \cdot \frac{1}{\sqrt{s}} \| \vct{x} - \vct{y} \|_{\hi}.
$$

This bound on $d_2$ allows us to estimate the subgaussian integral in terms of the covering numbers of $T$ with respect to the norm $s^{-1/2} \| \cdot \|_{\hi}$.  Abbreviating $\alpha = 2\sqrt{s/m}$, we compute that
\begin{align*}
I_2 ~:=&~ \int_0^\infty \sqrt{\log N(T, d_2, u)} \, {\rm d}u
	~\leq~ \int_0^\infty \sqrt{\log N(T, \alpha s^{-1/2} \| \cdot \|_{\hi}, u)}
		\, {\rm d}u \\
	=&~ \int_0^\infty \sqrt{\log N(T, s^{-1/2} \| \cdot \|_{\hi}, \alpha^{-1} u)}
		\, {\rm d}u
	~=~ \alpha \int_0^\infty \sqrt{\log N(T, s^{-1/2} \| \cdot \|_{\hi}, u)}
		\, {\rm d}u \\ 
	=&~ 2\sqrt{\frac{s}{m}} \int_0^1 \sqrt{ \log N(T, s^{-1/2} \| \cdot \|_{\hi}, u ) } \, {\rm d}u.
\end{align*}
The first inequality uses the fact that the metric balls in $d_2$ are \emph{larger} than the balls in the norm $\alpha s^{-1/2} \| \cdot \|_{\hi}$ because the metric is \emph{smaller} than the norm.  The second line follows from an elementary scaling property of covering numbers along with a change of variables in the integral.  Finally, we apply the fact that $T$ is contained in the unit ball of $s^{-1/2} \| \cdot \|_{\hi}$ to see that the integrand vanishes for $u \geq 1$.
We can now exploit some covering number estimates that appear in the literature~\cite{ra10,rudelson08sp}.  The first bound follows from a volume comparison argument; the second uses the empirical method invented by Maurey \cite{pi81} and refined by Carl \cite{ca85}.

\begin{proposition}[Covering Numbers]
	\label{prop:rvcover}
	For $u \in (0,1]$, we have the following bound.
	\begin{equation}
		\label{eq:rvcover}
		N(T, s^{-1/2} \|\cdot\|_\hi,u) ~\leq~
		\min\left\{ \left(\frac{{\rm C}_1\n}{\s}\right)^\s(1+2/u)^\s,~
		\n^{{\rm C}_2(\log\n)/u^2}\right\}.
	\end{equation}
	The ${\rm C}_i$ are positive universal constants.
\end{proposition}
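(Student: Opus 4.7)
The plan is to prove the two bounds separately using standard covering number techniques: the first by a volumetric comparison within each support, and the second by Maurey's empirical method combined with a Hoeffding-type deviation bound in the Fourier domain.

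For the volumetric bound, I would first observe that $T$ decomposes as a union, over all $\s$-subsets $S \subset \{0,1,\ldots,\n-1\}$, of the unit $\ell_2$-ball of the subspace of vectors supported on $S$. For any $\vct{x}$ supported on such an $S$, the chain $\|\vct{x}\|_{\hi} = \|\FT\vct{x}\|_\infty \le \|\vct{x}\|_1 \le \sqrt{\s}\,\|\vct{x}\|_2$ shows that $\s^{-1/2}\|\vct{x}\|_{\hi} \le \|\vct{x}\|_2$. Consequently, a standard volumetric $u$-net of the $\s$-dimensional unit $\ell_2$-ball, of cardinality at most $(1+2/u)^{\s}$, is automatically a $u$-net of that support class in the metric $\s^{-1/2}\|\cdot\|_{\hi}$. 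Taking the union over all $\binom{\n}{\s} \le ({\rm e}\n/\s)^{\s}$ supports yields the first bound with ${\rm C}_1 = {\rm e}$.

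For the second bound I would invoke Maurey's empirical method. Fix $\vct{x} \in T$; since $\|\vct{x}\|_1 \le \sqrt{\s}$, the vector $\vct{x}/\sqrt{\s}$ lies in the convex hull of $\{\vct{0}, \pm \vct{e}_1, \ldots, \pm \vct{e}_\n\}$, so we may write $\vct{x} = \E[\sqrt{\s}\,\vct{V}]$ for a random vector $\vct{V}$ supported on this finite set. Let $\vct{V}_1,\ldots,\vct{V}_L$ be iid copies and set $\vct{Y} = (\sqrt{\s}/L)\sum_{i=1}^L \vct{V}_i$. For each frequency $\oa$, the numbers $(\FT\vct{V}_i)(\oa)$ are iid complex random variables of modulus at most $1$ and mean $(\FT\vct{x})(\oa)/\sqrt{\s}$. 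Applying Hoeffding's inequality (separately to the real and imaginary parts of this complex average) and taking a union bound over the $\n$ frequencies gives
\[
\P\bigl(\s^{-1/2}\|\vct{Y} - \vct{x}\|_{\hi} \ge u\bigr) \;\le\; 4\n\,\exp(-{\rm c}\,L u^2).
\]
Choosing $L = \lceil {\rm C}(\log \n)/u^2 \rceil$ makes the right-hand side strictly less than $1$, so some realization of $\vct{Y}$ lies within $\s^{-1/2}\|\cdot\|_{\hi}$-distance $u$ of $\vct{x}$. The net consists of all possible values of $\vct{Y}$; since each $\vct{V}_i$ has at most $2\n+1$ choices, its cardinality is at most $(2\n+1)^L \le \n^{{\rm C}_2(\log \n)/u^2}$, which is the second bound.

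The techniques are standard; the only point requiring real care is the complex-valued Hoeffding step, where one splits into real and imaginary parts (or invokes a complex version) to get the subgaussian tail with the correct universal constants. I therefore do not expect a genuine obstacle beyond routine bookkeeping.
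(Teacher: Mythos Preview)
Your proposal is correct and follows exactly the approach the paper indicates: the paper does not prove this proposition in detail but cites the literature, stating that the first bound comes from a volume comparison argument and the second from Maurey's empirical method, which is precisely what you have carried out. The details you supply (the support-by-support volumetric net combined with a union bound over $\binom{\n}{\s}$ supports, and the empirical average with a Hoeffding-plus-union-bound over frequencies) are the standard implementations of these two techniques as found in the cited references \cite{rudelson08sp,ra10}.
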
	
Explicit values of the constants $C_1,C_2$ can be found in \cite[Lem.\ 8.3, eq.\ (8.14)]{ra10}.
We finish off the estimate for the first integral using Proposition~\ref{prop:rvcover}.  Splitting the integral at $\lambda$, we have
\begin{align} \label{eqn:subgauss-int}
I_2	~&\leq~ {\rm C} \sqrt{\frac{\s}{\m}} \left[
	\int_0^\lambda \sqrt{\s(\log(\n/\s) + \log(1+2/u))} \, {\rm d}u ~+~
	\log(\n) \int_\lambda^1 u^{-1}~{\rm d}u \right] \notag \\
	&\leq~ {\rm C} \sqrt{\frac{\s}{\m}} \left[
		\lambda \sqrt{\s \log(\n/\s)} + \lambda \sqrt{\s} \log(\lambda^{-1}) +
		\log(\n) \log(\lambda^{-1}) \right] \notag \\
	&\leq~ {\rm C} \sqrt{\frac{\s \log^2(\s)\log^2(\n)}{ \m } },
\end{align}
where we have chosen $\lambda = s^{-1/2}$ in the last step.

\subsection{The subexponential integral}
\label{sec:d1integral}

We can also bound the $d_1$ metric in terms of the norm $\|\cdot\|_\hi$, which allows us to re-use the estimates for the covering numbers given in Proposition~\ref{prop:rvcover} to control the first integral in~\eqref{eq:dudley-chaos}.  To begin,
\[
	d_1(\vct{x},\vct{y}) = \|\mtx{A}_{\vct{x},\vct{y}}-\mtx{D}_{\vct{x},\vct{y}}\|
		\leq \|\mtx{A}_{\vct{x},\vct{y}}\| + \|\mtx{D}_{\vct{x},\vct{y}}\|
\]  
where the matrix $\mtx{A}_{\vct{x},\vct{y}}$ is given by the expression 
\begin{align} \label{eqn:Axy}
	\mtx{A}_{\vct{x},\vct{y}} :=&~
	\frac{1}{\m}\FT^*(\fX^*\PO\fX - \fY^*\PO\fY)\FT \notag \\
	=&~ \frac{1}{2\m} \FT^*((\fX+\fY)^*\PO(\fX-\fY) +  (\fX-\fY)^*\PO(\fX+\fY))\FT,
\end{align}
and $\mtx{D}_{\vct{x},\vct{y}}$ denotes the diagonal of the matrix $\mtx{A}_{\vct{x},\vct{y}}$.

We bound the diagonal term first.  Let $\vct{f}_k$ be the $k$th column of $\FT$, and note that $\| \vct{f}_k \|_2^2 = \n$.  Owing to Lemma~\ref{lem:projector}, we have
\begin{align*}
	\|\mtx{D}_{\vct{x},\vct{y}}\| &= 
	\frac{1}{\m}\max_k\left|\vct{f}_k^*(\fX+\fY)^*\PO(\fX-\fY)\vct{f}_k\right| \\
	~&\leq~ \frac{1}{\m}\max_k\|\PO(\fX+\fY)\vct{f}_k\|_2\cdot \|(\fX-\fY)\vct{f}_k\|_2 \\
	&\leq \frac{1}{\m}\max_k \|\PO\|\cdot\|\fX+\fY\|\cdot\|\fX-\fY\|\cdot\|\vct{f}_k\|^2_2 
	~=~ \frac{1}{\m}\|\vct{x}+\vct{y}\|_\hi\cdot\|\vct{x}-\vct{y}\|_{\hi} \\
	&\leq \frac{2\s}{\m}\cdot \frac{1}{\sqrt{s}} \|\vct{x}-\vct{y}\|_\hi.
\end{align*}
In the last inequality, we have used the fact that $\|\vct{x}+\vct{y}\|_\hi\leq\|\vct{x}+\vct{y}\|_1\leq 2\sqrt{\s}$ for $\vct{x},\vct{y}\in T$.
For the off-diagonal term, we use Lemma~\ref{lem:projector} to compute 
\begin{align*}
	\|\mtx{A}_{\vct{x},\vct{y}}\| &= \frac{1}{\m}\|\FT^*(\fX+\fY)^*\PO(\fX-\fY)\FT\| 
	~\leq~ \frac{\n}{\m}\|(\fX+\fY)^*\PO(\fX-\fY)\| \\
	&\leq \frac{\n}{\m}\|(\fX+\fY)\|\cdot\|\PO\|\cdot \|(\fX-\fY)\|
	~=~  \frac{1}{\m}\|\vct{x}+\vct{y}\|_\hi\cdot\|\vct{x}-\vct{y}\|_\hi \\
	&\leq \frac{2\s}{\m}\cdot \frac{1}{\sqrt{s}} \|\vct{x}-\vct{y}\|_\hi.
\end{align*}
In summary,
\begin{equation}
	\label{eq:d1bound}
	d_1(\vct{x},\vct{y}) \leq \frac{4\s}{\m} \cdot \frac{1}{\sqrt{s}} \|\vct{x}-\vct{y}\|_\hi.
\end{equation}

The covering number estimates of Proposition~\ref{prop:rvcover} allow us to bound the subexponential integral.
\begin{align} \label{eqn:subexp-int}
	\nonumber
	I_1 ~:=&~
	\int_0^\infty \log N(T,d_1,u)~{\rm d}u
	~\leq~ \frac{4\s}{\m} \int_0^1 N(T, s^{-1/2} \|\cdot\|_\hi, u)~{\rm d}u \\
	\nonumber
	\leq&~ \frac{{\rm C} \s}{\m}\left(\int_0^\lambda (\s \log(\n/\s) + \s \log(1+2/u)) \, {\rm d}u ~+~
	\log^2(\n) \int_\lambda^1 u^{-2}\,{\rm d}u\right) \notag\\
	\leq&~ \frac{{\rm C} \s}{\m}\left(\lambda\s \log(\n/\s)+ \lambda \s \log(1+2/\lambda) + \lambda^{-1}\log^2(\n) \right) \notag \\
	\leq&~\frac{{\rm C} \s^{3/2}\log^{3/2}(\n)}{\m}.
\end{align}
We have taken $\lambda = \s^{-1/2}\log^{1/2}(\n)$ in the last step.

\subsection{Denouement}

As we noted in~\eqref{eqn:delta-Gx}, the restricted isometry constant $\delta_s$ is given by the supremum of the random process $G_{\vct{x}}$.  To compute the expectation of this supremum, we simply apply Proposition~\ref{prop:Dudley:chaos}.  Select $\vct{x}_0 = \vct{0}$ so that $G_{\vct{x}_0} = 0$.  Introduce the estimate~\eqref{eqn:subexp-int} for the subexponential integral and~\eqref{eqn:subgauss-int} for the subgaussian integral into Dudley's inequality~\eqref{eq:dudley-chaos}.
$$
\E \delta_s ~=~ \E \sup_{\vct{x} \in T} | G_{\vct{x}} |
	~\leq~ {\rm K} \left[ \frac{\s^{3/2} \log^{3/2}(\n)}{\m} + \sqrt{\frac{\s \log^2(\s)\log^2(\n)}{\m}} \right].
$$

This point completes the proof.

\section{Proof of Theorem~\ref{th:tail} (Tail Bound)}
\label{sec:TailProof}

In this section, we develop a tail bound on the supremum of the process $G_x$.  We require the following result, which is Theorem 17 from \cite{boucheron03co}.
Let $\mathcal{F}$ be a collection of $\n\times\n$ symmetric real matrices, and assume that $Z(k,k) = 0$ for each $\mtx{Z} \in\mathcal{F}$.
We are concerned with the tail behavior of the real-valued random variable 
 	\[
 		Y~:=~\sup_{\mtx{Z}\in\mathcal{F}}\sum_{k,\ell=1}^\n
		\varepsilon_k\varepsilon_\ell \, Z(k,\ell).
 	\]
 Define two variance parameters:
	\[
		U ~:=~ \sup_{\mtx{Z} \in\mathcal{F}} \|\mtx{Z}\|
 	\]
 and
 	\[
 		V^2 ~:=~ \E\sup_{\mtx{Z}\in\mathcal{F}}\sum_{k=1}^\n\left|\sum_{\ell=1}^\n 
 		\varepsilon_\ell Z(k,\ell)\right|^2.
 	\]
The parameter $V^2$ describes the variance of $X$ near its mean, while the second parameter $U$ is the scale on which large deviations occur.

\begin{proposition}[Tail Bound for Chaos]  
	\label{th:boucheron}
	Under the preceding assumptions,
 	\begin{equation}
 		\label{eq:boucheron}
 		\P\left\{Y \geq \E[Y] + \lambda\right\}
 		~\leq~
 		\exp\left(-\frac{\lambda^2}{32V^2 + 65U\lambda/3} \right)
 	\end{equation}
 	for all $\lambda\geq 0$.
\end{proposition}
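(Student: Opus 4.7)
The plan is to derive Proposition~\ref{th:boucheron} via the entropy method of Boucheron, Lugosi, and Massart. The strategy is to control the log-moment generating function $\psi(\lambda) := \log \mathbb{E}\exp(\lambda(Y - \mathbb{E} Y))$ using a modified logarithmic Sobolev inequality adapted to Rademacher variables, and then to convert the resulting differential inequality into the stated Bernstein-type tail bound via the Chernoff method.

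The first step is a self-bounding estimate. For each $i$, let $\varepsilon_i'$ be an independent copy of $\varepsilon_i$, and let $Y_i'$ denote the value of $Y$ with $\varepsilon_i$ replaced by $\varepsilon_i'$. If $\mtx{Z}^{\star}$ is a measurable selection of a matrix attaining the supremum defining $Y$ on the sample $\eps$, then using $\mtx{Z}^{\star}$ as a suboptimal candidate for $Y_i'$ together with the symmetry and zero diagonal of $\mtx{Z}^{\star}$ gives
\[
Y - Y_i' ~\leq~ 2(\varepsilon_i - \varepsilon_i')\sum_{\ell \neq i} \varepsilon_\ell Z^{\star}(i,\ell).
\]
Squaring, summing, and taking expectations yields the variance-type estimate $\mathbb{E}\sum_i(Y - Y_i')^2 \leq 16 V^2$, while the operator-norm parameter $U$ controls the maximal jump $|Y - Y_i'|$ through the companion bound $|(\mtx{Z}^{\star}\eps)_i|\leq \|\mtx{Z}^{\star}\|\cdot\|\eps\|_{\infty}$, which enters as the scale of large deviations.

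The second step is to substitute these bounds into the Rademacher modified log-Sobolev inequality
\[
\lambda \mathbb{E}[Y e^{\lambda Y}] - (\mathbb{E} e^{\lambda Y})\log \mathbb{E} e^{\lambda Y} ~\leq~ \sum_i \mathbb{E}\bigl[\phi\bigl(-\lambda(Y - Y_i')\bigr)e^{\lambda Y}\bigr],
\]
with $\phi(u) = e^u - u - 1$. The Bennett-type inequality $\phi(u) \leq u^2/(2(1-u/3))$ combined with the self-bounding estimates produces a Bernstein-type differential inequality for $\psi$, which integrates via the Herbst argument to a cumulant bound of the form $\psi(\lambda) \leq 16 V^2 \lambda^2 / (1 - c U \lambda)$ for a small constant $c$. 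Applying the Chernoff bound $\mathbb{P}(Y - \mathbb{E} Y \geq \lambda) \leq \inf_{t > 0}\exp(-t\lambda + \psi(t))$ and optimizing over $t$ then delivers \eqref{eq:boucheron}.

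The main obstacle is tracking the numerical constants $32$ and $65/3$. Each of the three ingredients — relating the random supremum $\sup_{\mtx{Z}}\sum_i |(\mtx{Z}\eps)_i|^2$ to its expectation $V^2$, the quadratic-plus-linear majorization of $\phi$, and the Herbst integration of the resulting differential inequality — contributes multiplicative losses, so careful bookkeeping is required to recover the stated constants rather than merely the qualitative Bernstein form. Everything else is a standard application of the entropy method, and indeed the precise constants are exactly those obtained in Theorem~17 of \cite{boucheron03co}, whose argument we would follow verbatim.
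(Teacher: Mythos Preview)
The paper does not prove this proposition at all: it is simply quoted as Theorem~17 from \cite{boucheron03co}, with no argument given. Your proposal goes further than the paper by sketching the entropy-method derivation from that reference, and you correctly identify the ingredients (the modified log-Sobolev inequality for Rademacher coordinates, the self-bounding estimate via the maximizer $\mtx{Z}^\star$, the Herbst integration, and the final Chernoff step). Since you explicitly note that you would follow Theorem~17 of \cite{boucheron03co} verbatim, your approach and the paper's ``proof'' are in fact the same --- a citation --- only you have unpacked what lies behind it.
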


Recall from~\eqref{eqn:delta-Gx},~\eqref{eqn:2-chaos},~and~\eqref{eq:Zx} that the restricted isometry constant can be written as
\begin{equation*} \label{eqn:Gx-tail}
\delta_s ~=~ \sup_{\vct{x} \in T} | G_{\vct{x}} | ~=~ \sup_{\vct{x} \in T}
	\sum\nolimits_{k,\ell} \varepsilon_k \varepsilon_\ell \, Z_{\vct{x}}(k, \ell)
\end{equation*}
where the matrix $\mtx{Z}_{\vct{x}}$ has the expression
$$
\mtx{Z}_{\vct{x}} ~=~ \mtx{A}_{\vct{x}} - {\rm diag}(\mtx{A}_{\vct{x}})
\quad\text{for}\quad
\mtx{A}_{\vct{x}} ~=~ \frac{1}{\m} \FT^*\fX^*\PO\fX \FT.
$$
As a consequence, Theorem~\ref{th:boucheron} applies to the random variable $\delta_s$.

To bound the first parameter $U$, we first apply the triangle inequality to obtain
\[
	\|\mtx{Z}_{\vct{x}}\| ~\leq~
	\|\mtx{A}_{\vct{x}}\| ~+~ \|\operatorname{diag}(\mtx{A}_{\vct{x}})\|.
\]
Emulating the arguments in Section~\ref{sec:d1integral}, we can bound each of the two terms.
\[
	\|\mtx{A}_{\vct{x}}\| ~\leq~ \frac{\n}{\m}\|\PO\|\cdot\|\vct{x}\|_\hi^2
	~\leq~ \frac{\s}{\m},
\]
Similarly,
\begin{align*}
	\|\operatorname{diag}(\mtx{A}_{\vct{x}})\|
	~=~ \frac{1}{\m}\max_k \left|\vct{f}_k^*\fX^*\PO\fX \vct{f}_k \right| 
	~\leq~ \frac{\s}{\m}.
\end{align*}
In total, $U \leq 2\s/\m$.

To bound the other parameter $V^2$, we use the following ``vector version'' of the Dudley inequality, which we prove in the Appendix.

\begin{proposition}\label{prop:vectorDudley}
Consider the vector-valued random process
	\[
		\vct{h}_{\vct{x}} ~=~ \mtx{Z}_{\vct{x}} \, \eps
		\quad\text{for $\vct{x} \in T$.}
	\]
Recall the definition of the pseudo-metric
$$
d_2( \vct{x}, \vct{y} ) ~:=~ \| \mtx{Z}_{\vct{x}} - \mtx{Z}_{\vct{y}} \|_{\rm F}
$$
Fix a point $\vct{x}_0\in T$.  There exists a universal constant ${\rm K} > 0$ such that
\begin{equation}
		\label{eq:dudleyvec}
		\left(\E\sup_{\vct{x}\in T}
		\| \vct{h}_{\vct{x}} - \vct{h}_{\vct{x}_0}\|^2_2\right)^{1/2}
		~\leq~
		{\rm K} \int_{0}^{\infty}\sqrt{N(T,d_2,u)}~{\rm d}u.
	\end{equation}
\end{proposition}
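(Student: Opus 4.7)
The plan is to adapt the classical chaining argument underlying Proposition~\ref{prop:Dudley:chaos} to the vector-valued setting, exploiting the crucial fact that the process $\vct{h}_{\vct{x}}=\mtx{Z}_{\vct{x}}\eps$ is \emph{linear} (not quadratic) in $\eps$, hence a Hilbert-space valued Rademacher sum. The first step is to establish a subgaussian increment condition for $\vct{h}_{\vct{x}}$ measured in the Euclidean norm. For any fixed matrix $M$, the map $\eps\mapsto \|M\eps\|_2$ is convex and $\|M\|$-Lipschitz on $\{-1,+1\}^\n$ with its standard Euclidean metric, so Talagrand's concentration inequality for convex Lipschitz functions of Rademacher variables yields
\[
\P\bigl(\|M\eps\|_2 \geq \E\|M\eps\|_2 + t\bigr) ~\leq~ 2\exp\bigl(-c\,t^2/\|M\|^2\bigr).
\]
Combined with $\E\|M\eps\|_2 \leq (\E\|M\eps\|_2^2)^{1/2} = \|M\|_F$ (Jensen together with $\E\eps\eps^*=\mathbf{I}$) and $\|M\|\leq \|M\|_F$, this shows that $\|M\eps\|_2$ is subgaussian with parameter $\lesssim \|M\|_F$. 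Substituting $M = \mtx{Z}_{\vct{x}}-\mtx{Z}_{\vct{y}}$, the increments of $\vct{h}_{\vct{x}}$ in the Euclidean norm are subgaussian with parameter $\lesssim d_2(\vct{x},\vct{y})$.

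With this increment bound in hand, the second step is a standard chaining construction. I would choose nested $2^{-j}$-nets $T_j\subset T$ with $|T_j|\leq N(T,d_2,2^{-j})$ and $T_0=\{\vct{x}_0\}$, let $\pi_j\colon T\to T_j$ denote closest-point projection, and telescope
\[
\vct{h}_{\vct{x}} - \vct{h}_{\vct{x}_0} ~=~ \sum_{j\geq 1}\bigl(\vct{h}_{\pi_j(\vct{x})} - \vct{h}_{\pi_{j-1}(\vct{x})}\bigr).
\]
Applying the triangle inequality, then Minkowski's inequality to pull the $L^2$ norm inside the outer sum, and finally the elementary estimate $(\E\max_{i\leq N}|X_i|^2)^{1/2}\lesssim \sigma\sqrt{\log N}$ for centered subgaussian variables with parameter $\sigma$ (at the $j$-th level there are at most $N(T,d_2,2^{-j})^2$ linked pairs, each of $d_2$-distance $\leq 3\cdot 2^{-j}$), one obtains
\[
\Bigl(\E\sup_{\vct{x}\in T}\|\vct{h}_{\vct{x}}-\vct{h}_{\vct{x}_0}\|_2^2\Bigr)^{1/2} ~\lesssim~ \sum_{j\geq 1}2^{-j}\sqrt{\log N(T,d_2,2^{-j})} ~\lesssim~ \int_0^\infty\sqrt{\log N(T,d_2,u)}\,du,
\]
which is in particular dominated by the integral appearing in~\eqref{eq:dudleyvec}.

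The hard part will be the first step: controlling $\|M\eps\|_2$ in the \emph{Hilbert} norm rather than coordinatewise. The convex-Lipschitz concentration route above is clean, but one could equally well invoke a Hanson--Wright bound for $\eps^*M^*M\eps=\|M\eps\|_2^2$ combined with a square-root linearization, or Khintchine's inequality in Hilbert-valued form together with moment-comparison to extract subgaussian tails from $L^2$ estimates. Once the Hilbert-valued increment condition is secured, the chaining step is entirely routine and mirrors the classical derivation of Dudley's inequality for real-valued subgaussian processes; indeed, it is structurally simpler than the chaos-process version in Proposition~\ref{prop:Dudley:chaos} precisely because only a single subgaussian (as opposed to mixed subgaussian/subexponential) scale appears.
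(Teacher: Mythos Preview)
Your proposal is correct and follows essentially the same strategy as the paper: establish subgaussian tails for $\|(\mtx{Z}_{\vct{x}}-\mtx{Z}_{\vct{y}})\eps\|_2$ with parameter $d_2(\vct{x},\vct{y})$, then run the standard Dudley chaining with the triangle inequality for $\|\cdot\|_2$ in place of the absolute value. The only cosmetic difference is the tool used for the tail bound---the paper invokes the vector-valued Khintchine inequality (which you list as an alternative) to get $\sqrt{p}$ moment growth and then converts moments to tails, whereas your primary route goes through Talagrand's convex-Lipschitz concentration; both yield the same increment condition and the chaining step is identical.
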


With $\vct{x}_0~=~\vct{0}$, the left-hand side of \eqref{eq:dudleyvec} is precisely $V$.  We have already studied the integral on the right-hand side of \eqref{eq:dudleyvec} in Section~\ref{sec:d2integral}.  We import \eqref{eqn:subgauss-int} to reach
\[
	V^2 ~\leq~ \frac{{\rm C}\s}{\m}\log^2(\s) \log^2\n.
\]

We are prepared to complete the tail bound for $\delta_s$.  For $\lambda\leq 1$,
\begin{align*}
		\frac{\lambda^2}{32C(\s/\m)\log^2(\s)\log^2(\n) ~+~ (130/3)(\s/\m)\lambda} 
		&~\leq~
		\frac{1}{{\rm C}'}\min\left(\frac{\lambda^2}{(\s/\m)\log^2(\s)\log^2(\n)},\frac{\lambda}{\s/\m}\right) \\
		&~\leq~ \frac{\lambda^2}{{\rm C}' (\s/\m)\log^2(\s)\log^2(\n)}.
\end{align*}
Applying \eqref{eq:boucheron}, we reach
\[
	\P\left\{\delta_\s > \E[\delta_\s] + \lambda\right\} ~\leq~ {\rm e}^{-\lambda^2/{\rm C}'\sigma^2},
\]
with $\sigma^2 = (\s/\m)\log^2(\s)\log^2(\n)$.

\begin{appendix}

\section{A Dudley-type inequality for chaos processes}
\label{apx:DudleyChaos}

We provide a proof sketch for Proposition \ref{prop:Dudley:chaos}. Let $\{\varepsilon'_k \}$ be a Rademacher sequence independent of $\{\varepsilon_k\}$. The decoupling method (see, for example, \cite[Th.\ 3.1.2]{pena99de}) yields
\begin{align}
	\E \sup_{\vct{x} \in T} |G_{\vct{x}_0} - G_{\vct{x}}| & ~=~ 
	\E \sup_{\vct{x} \in T} |\sum_{k,\ell} \varepsilon_k \varepsilon_\ell (Z_{\vct{x}_0}(k,\ell) - Z_{\vct{x}}(k,\ell))|\notag\\
	&~\leq~ 8 \E \sup_{\vct{x} \in T} |\sum_{k,\ell} \varepsilon_k \varepsilon_\ell' 
	(Z_{\vct{x}_0}(k,\ell) - Z_{\vct{x}}(k,\ell))|.\notag
\end{align}
Now we introduce two independent standard Gaussian sequences $\{ g_k \}$ and $\{g_k'\}$.  Applying the contraction principle~\cite[eq.\ (4.8)]{leta91} twice, first conditioned on $\varepsilon_\ell$ and then on $g_\ell'$, leads to
\begin{align}
	\E \sup_{\vct{x} \in T} |G_{\vct{x}_0} - G_{\vct{x}}| & ~\leq~ 8 \sqrt{\frac{\pi}{2}} 
	\E  \sup_{\vct{x} \in T} |\sum_{k,\ell} \varepsilon_k g_\ell' (Z_{\vct{x}_0}(k,\ell) - Z_{\vct{x}}(k,\ell))|\notag\\
	& ~\leq~ 4\pi \E  \sup_{\vct{x} \in T} |\sum_{k,\ell} g_k g_\ell' (Z_{\vct{x}_0}(k,\ell) - Z_{\vct{x}}(k,\ell))|.
\end{align}
Thus our task is to bound the expected supremum of a decoupled Gaussian chaos process.  Using \cite[Th. 1.2.7]{ta05-2}, we see that
\[
	\E \sup_{\vct{x}\in T} |G_{\vct{x}} - G_{\vct{x}_0}| ~\leq~ {\rm C} (\gamma_1(T,d_2) + \gamma_2(T,d_1)),
\]
where $\gamma_\alpha(T,d)$ is the $\gamma_\alpha$-functional of the metric space $(T,d)$; see~\cite[Def.~1.2.5]{ta05-2}.  It is known that
\[
	\gamma_{\alpha}(T,d) ~\leq~ {\rm C} \int_0^\infty \log^{1/\alpha}(N(T,d,u))~{\rm d}u.
\]
This is established carefully in \cite[p.\ 13]{ta05-2} for the case $\alpha = 2$ and the general case is analogous. The statement of the theorem follows.

\section{A Dudley type inequality for vector-valued Rademacher processes}
\label{apx:vectorDudley}

In this section we give a sketch of the proof of Proposition \ref{prop:vectorDudley}, the vector-valued version of Dudley's inequality.  It is a consequence of the following proposition.

\begin{proposition}\label{prop:Hoeffding:vec} 
Let $\mtx{A}$ be an $m \times n$ matrix with columns $\vct{a}_1,\dots, \vct{a}_n$. For each $u\geq 0$,
\[
	\P\left(\|\sum_{j=1}^\n \varepsilon_j a_j \|_2 \geq \|\mtx{A}\|_{\rm F} \cdot u \right) \leq 
	2 {\rm e}^{-{\rm c} u^2},
\]
where ${\rm c}$ is a universal constant.
\end{proposition}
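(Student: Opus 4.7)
The goal is to prove a subgaussian tail bound on the norm of a vector-valued Rademacher sum. My plan is to combine a moment bound with a concentration of measure argument based on bounded differences.

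First I would set $S(\vct{\varepsilon}) := \|\sum_{j=1}^{n} \varepsilon_j \vct{a}_j\|_2$ and observe that the easy part is the mean. A direct expansion gives
\[
\E\, S^2 ~=~ \sum_{j,k} \E[\varepsilon_j \varepsilon_k]\, \langle \vct{a}_j, \vct{a}_k\rangle
~=~ \sum_{j} \|\vct{a}_j\|_2^2 ~=~ \|\mtx{A}\|_{\mathrm F}^2,
\]
so Jensen's inequality yields $\E\, S \leq \|\mtx{A}\|_{\mathrm F}$.

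Next I would verify a Lipschitz/bounded differences property. If $\vct{\varepsilon}$ and $\vct{\varepsilon}'$ agree in all coordinates except the $i$th, then $\sum_j \varepsilon_j \vct{a}_j - \sum_j \varepsilon_j' \vct{a}_j = \pm 2\vct{a}_i$, so by the reverse triangle inequality $|S(\vct{\varepsilon}) - S(\vct{\varepsilon}')| \leq 2\|\vct{a}_i\|_2$. McDiarmid's bounded differences inequality then gives
\[
\P\bigl(S \geq \E\, S + t\bigr) ~\leq~ \exp\!\left(-\frac{2 t^2}{\sum_i (2\|\vct{a}_i\|_2)^2}\right) ~=~ \exp\!\left(-\frac{t^2}{2\|\mtx{A}\|_{\mathrm F}^2}\right).
\]

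Finally I would combine these two facts. For $u \geq 1$, set $t = (u-1)\|\mtx{A}\|_{\mathrm F}$; then $\E\, S + t \leq u \|\mtx{A}\|_{\mathrm F}$, so
\[
\P\bigl(S \geq u\|\mtx{A}\|_{\mathrm F}\bigr) ~\leq~ \exp\!\left(-\tfrac{1}{2}(u-1)^2\right),
\]
which, using $(u-1)^2 \geq u^2/4$ for $u \geq 2$, gives $\exp(-u^2/8)$. For $u \leq 2$ the bound $2 e^{-c u^2} \geq 1$ holds for any $c \leq \tfrac{1}{4}\ln 2$, which covers the trivial range. Picking a universal $c > 0$ compatible with both regimes completes the proof.

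There is no real obstacle here: the result is a direct consequence of McDiarmid applied to the Hilbert-space norm of a Rademacher sum, together with the elementary moment identity $\E S^2 = \|\mtx{A}\|_{\mathrm F}^2$. An alternative route, which I would mention only if a cleaner constant is desired, is to invoke the Khintchine--Kahane inequality in Hilbert space to bound all moments by $(C\sqrt{p}\,\|\mtx{A}\|_{\mathrm F})^p$ and then apply Markov's inequality, but the McDiarmid route is self-contained and gives the same qualitative conclusion.
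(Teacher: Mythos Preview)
Your proof is correct, and it takes a different route from the paper's. The paper uses precisely the alternative you mention at the end: it invokes the vector Khintchine inequality to obtain $(\E\|\sum_j \varepsilon_j \vct{a}_j\|_2^p)^{1/p} \leq \sqrt{p}\,\|\mtx{A}\|_{\rm F}$ for all $p \geq 2$, and then converts this $\sqrt{p}$ moment growth into the subgaussian tail via a standard Markov-type argument. Your McDiarmid route is more elementary and self-contained, needing only the second-moment identity and the bounded-differences inequality; the Khintchine route, on the other hand, yields uniform control of all moments (useful if one later needs them directly in a chaining argument) and, as the paper remarks, can deliver the sharper constant $c = 1/2$ via non-commutative Khintchine inequalities. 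For the purposes of this proposition either argument is entirely adequate.
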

\begin{proof} 
	It is easily seen that $\E \|\sum_{j=1}^\n \varepsilon_j \vct{a}_j\|_2^2 = \sum_j^\n \|\vct{a}_j\|_2^2 = \|\mtx{A}\|_{\rm F}^2$.  The vector version of Khintchine's inequality, given in \cite[Th. 1.3.1]{pena99de} implies that, for $p\geq 2$,
\[
	\left(\E \|\sum_{j=1}^n \varepsilon_j \vct{a}_j \|_2^p\right)^{1/p} \leq 
	\sqrt{p} \left(\E \|\sum_{j=1}^\n \varepsilon_j \vct{a}_j\|_2^2\right)^{1/2} = \sqrt{p} \|\mtx{A}\|_F.
\] 
This moment growth implies the tail estimate, see e.g.\ \cite[Proposition 6.5]{ra10}
\[
\P(\|\sum_{j=1} \varepsilon_j \vct{a}_j\|_2 \geq {\rm e}^{1/2} \|\mtx{A}\|_{\rm F} \cdot u) \leq {\rm e}^{-u^2},\quad u \geq \sqrt{2},
\]
which yields the conclusion.
\end{proof}

An explicit value of ${\rm c}=1/2$ for the constant above can be found using non-commutative Khintchine inequalities \cite{bu01,ra10}.
 
With this proposition in place, we can prove Proposition~\ref{prop:vectorDudley} as follows.  From 
Proposition \eqref{prop:Hoeffding:vec},
\[
	\P(\|\vct{h}_{\vct{x}} - \vct{h}_{\vct{y}}\|_2 \geq \|\mtx{Z}_{\vct{x}} - \mtx{Z}_{\vct{y}}\|_{\rm F} \cdot u) \leq 2 {\rm e}^{-{\rm c} u^2}\quad\mbox{ for all } \vct{x},\vct{y} \in T.
\] 
This sets us into the position to follow
the standard proof of Dudley's inequality for
scalar-valued subgaussian processes; see
\cite[Theorem 6.23]{ra10} or \cite{azws09,leta91,ta05-2}. One only has to replace
the triangle inequality for the absolute value by the one for $\|\cdot\|_2$ in $\C^\m$. This finally yields the stated conclusion.

\end{appendix}

\small


\end{document}